\documentclass[journal]{IEEEtran}
\usepackage{cite}
\usepackage{amsmath,amssymb,amsfonts,amsthm}
\usepackage{bm}
\usepackage{graphicx}
\usepackage{textcomp}
\usepackage{etoolbox}
\apptocmd{\thebibliography}{\setlength{\itemsep}{0pt}}{}{}

\theoremstyle{plain}
\newtheorem{theorem}{Theorem}
\newtheorem{lemma}{Lemma}
\newtheorem{proposition}{Proposition}
\newtheorem{corollary}{Corollary}
\theoremstyle{definition}
\newtheorem{assumption}{Assumption}
\newtheorem{problem}{Problem}
\theoremstyle{remark}
\newtheorem{remark}{Remark}

\newcommand{\E}{\mathbb{E}}
\newcommand{\PP}{\mathbb{P}}

\newcommand{\Rset}{\mathbb{R}}
\newcommand{\ind}{\mathbf{1}}
\newcommand{\tr}{\operatorname{tr}}

\newcommand{\TV}{\operatorname{TV}}
\newcommand{\calA}{\mathcal{A}}
\newcommand{\calZ}{\mathcal{Z}}
\newcommand{\dt}{\Delta t}
\newcommand{\eps}{\varepsilon}

\begin{document}
\title{Model Predictive Path Integral Control as a Quantum Query Problem}
\author{Goutam Das, \IEEEmembership{Member, IEEE}, and Takashi Tanaka, \IEEEmembership{Senior Member, IEEE}
\thanks{This work is supported by DARPA COMPASS program grant HR0011-25-3-0210 and AFOSR DSCT program grant FA9550-25-1-0347.}
\thanks{The authors are with the School of Aeronautics and Astronautics,
Purdue University, West Lafayette, IN 47907 USA (e-mail:
das347@purdue.edu; ttanaka16@purdue.edu).}}
\maketitle
\begin{abstract}
Model predictive path integral control computes its update from cost-weighted trajectory samples and may require many classical rollouts in rare-event or high-accuracy regimes. We reformulate each component of the finite-ensemble MPPI update as a ratio of bounded path expectations and construct reversible rollout oracles encoding them as success probabilities, making the update directly estimable by quantum amplitude estimation. This gives a quadratic improvement in the query dependence on accuracy and rare-event desirability over classical Monte Carlo sampling, matching known lower bounds for the underlying scalar problem below the exhaustive-evaluation threshold, while our coordinatewise construction incurs a linear dependence on the number of control inputs. For a fixed ensemble, the low-temperature weights concentrate on the minimum-cost trajectories, connecting the limiting control to quantum minimum finding when the minimizer is unique. A fully enumerable guidance example validates the predicted estimator scalings, and an illustrative operation-count model with a crossover condition separates query advantage from modeled implementation advantage.
\end{abstract}
\begin{IEEEkeywords}
Quantum information and control, Stochastic optimal control, Predictive control for nonlinear systems.
\end{IEEEkeywords}

\section{Introduction}\label{sec:intro}

\IEEEPARstart{P}{ath} integral control expresses the solution of a class of stochastic optimal control problems as an expectation over future open-loop trajectories of a passive system~\cite{kappen05,todorov09,theodorou10}. Model predictive path integral (MPPI) control applies this representation in a receding-horizon manner~\cite{williams17,williams18}: at each control step, it evaluates predicted open-loop trajectories from the current measured state, applies only the first control update, and then replans. 
Although this approach accommodates nonlinear control-affine dynamics and general state costs with quadratic control effort, it may require many rollouts when most trajectories have high costs and nearly zero weights. Importantly, MPPI does not require the individual trajectory costs as its final output, but their aggregate contribution to the control update. Its computational bottleneck can therefore be viewed as an expectation-estimation problem.

Quantum computation represents information using qubits and can process a superposition of possible inputs. A classical function is supplied to a quantum algorithm as a reversible circuit, called an oracle, and quantum interference is used to estimate selected properties of its outputs. Not all outputs become individually available; rather, certain
averages or minima can be estimated with fewer oracle evaluations. In particular, quantum amplitude estimation estimates an average encoded as a probability~\cite{bhmt02} and can provide a quadratic query improvement over classical Monte Carlo estimation~\cite{montanaro15}. Applying it to MPPI requires the rollout dynamics and cost to be implemented reversibly and the control update to be expressed in a form that amplitude estimation can evaluate. The central question of this letter is therefore whether the MPPI update can be reformulated as a quantum estimation problem.

Classical sampling-complexity bounds for path-integral methods have been developed for general nonlinear trajectory optimization~\cite{yoon22} and for discrete-time linear-quadratic control~\cite{patil24}.
Quantum methods have also been considered for nonlinear model predictive control~\cite{novara24}, 
stochastic optimal stopping~\cite{doriguello22}, and
reinforcement learning~\cite{wiedemann22}.
Closest to this work, QuantFPFlow~\cite{weinberg26} estimates a stationary quantity used for reinforcement-learning exploration. In contrast, we estimate finite-horizon path quantities that directly determine the MPPI control input. 

The contributions of this letter are as follows:

\begin{itemize}
\item[\textbf{(C1)}] We reformulate the finite-ensemble MPPI update as a ratio of bounded path expectations and construct reversible oracles that encode these expectations as quantum success probabilities. This makes the control update directly estimable by quantum amplitude estimation.

\item[\textbf{(C2)}] We establish quantum and classical query bounds for estimating the MPPI update. The quantum method improves the dependence on estimation accuracy and rare-event desirability, and corresponding lower bounds match these improvements for the underlying scalar estimation problem below the exhaustive-evaluation threshold.

\item[\textbf{(C3)}] We prove that, for a fixed trajectory ensemble, the low-temperature path weights concentrate on the minimum-cost trajectories. When the minimum is unique, quantum minimum finding recovers the limiting control with a quadratic reduction in oracle evaluations over exhaustive search.

\item[\textbf{(C4)}] We complement the query analysis with an illustrative operation-count model and a leading-order crossover condition identifying when the query reduction can survive the cost of reversible implementation.



\end{itemize}

\section{Path Integral Control}\label{sec:pic}

\subsection{Stochastic optimal control problem}\label{sec:pic:soc}

Consider the controlled It\^o diffusion on $[0,T]$,
\begin{equation}\label{eq:sde}
    dx_t=f(x_t)\,dt+G(x_t)\bigl(u_t\,dt+\sigma\,dw_t\bigr),\qquad x_0=\bar{x},
\end{equation}
where $x_t\in\Rset^n$, $u_t\in\Rset^m$, $w_t$ is an $m$-dimensional standard Brownian motion, $G(x)\in\Rset^{n\times m}$, and $\sigma\in\Rset^{m\times m}$ is invertible. Define $\Sigma:=\sigma\sigma^{\!\top}\succ0$. The finite-horizon cost is
\begin{equation}\label{eq:cost}
    J(u)=\E\!\left[\phi(x_T)+\int_0^T\left(q(x_t)+\tfrac12u_t^{\!\top}Ru_t\right)dt\right],
\end{equation}
where $R\in\Rset^{m\times m}$, $q:\Rset^n\to[0,\infty)$ is the running state cost, and $\phi:\Rset^n\to[0,\infty)$ is the terminal cost. The value function $V(x,t)$ is the infimum of the corresponding cost-to-go over admissible controls.

\begin{assumption}\label{as:soc}
\emph{(A1)} The control and diffusion enter through the same channels $G$. \emph{(A2)} $R\succ0$. \emph{(A3)} The matching condition $\Sigma=\lambda R^{-1}$ holds for some $\lambda>0$. \emph{(A4)} The coefficients $f$ and $G\sigma$ are Lipschitz and satisfy the standard growth conditions for well-posedness of~\eqref{eq:sde}; moreover, $q$ and $\phi$ are continuous, nonnegative, and satisfy the integrability conditions required by the Feynman--Kac formula~\cite[Thm.~5.7.6]{karatzasshreve}.
\end{assumption}

The shared channels in A1 and the covariance matching in A3 enable the path-integral representation below, while A2 guarantees a unique pointwise minimizer in the Hamilton--Jacobi--Bellman equation.

\subsection{Path-integral representation}\label{sec:pic:hjb}

With $V(x,T)=\phi(x)$, dynamic programming gives
\begin{multline}\label{eq:hjb}
    -\partial_tV=\min_u\Bigl[q+\tfrac12u^{\!\top}Ru+(f+Gu)^{\!\top}\nabla V\\
    +\tfrac12\tr\!\left(G\Sigma G^{\!\top}\nabla^2V\right)\Bigr],
\end{multline}
where $f$, $G$, and $q$ are evaluated at $x$. The minimizing input is $u^*=-R^{-1}G^{\!\top}\nabla V$.
Applying the Cole--Hopf transformation $V=-\lambda\log\psi$ and using A3 cancels the terms quadratic in $\nabla\psi$, yielding the linear backward equation
\begin{equation}\label{eq:linpde}
    \partial_t\psi=\frac{q}{\lambda}\psi-f^{\!\top}\nabla\psi-\tfrac12\tr\!\left(G\Sigma G^{\!\top}\nabla^2\psi\right),
\end{equation}
with terminal condition $\psi(x,T)=e^{-\phi(x)/\lambda}$, and
\begin{equation}\label{eq:ulog}
    u^*(x,t)=\lambda R^{-1}G^{\!\top}\nabla\log\psi(x,t).
\end{equation}

Let $\PP$ denote the path measure of the passive diffusion $dx_t=f(x_t)\,dt+G(x_t)\sigma\,dw_t$, and write $S_{t,T}:=\phi(x_T)+\int_t^Tq(x_s)\,ds$ for the cost accumulated along a passive trajectory from $t$ to $T$. The Feynman--Kac formula gives
\begin{equation}\label{eq:fk}
\psi(x,t)=\E_{\PP}\!\left[e^{-S_{t,T}/\lambda}\Bigm|x_t=x\right].
\end{equation}
Thus, the desirability $\psi$, and hence $V=-\lambda\log\psi$, is determined by a cost-weighted expectation under dynamics containing no control. This expectation is estimated by classical sampling in MPPI and encoded by the quantum oracles introduced in Section~\ref{sec:quantum}.

\subsection{Finite trajectory ensemble}\label{sec:pic:disc}

We now construct the finite ensemble used in the query formulation. Fix $N$ time steps with $\dt=T/N$ and a sampling scale $\sigma_s>0$. 
Replace the Gaussian increments by independent Rademacher variables $\eps_{k,i}\in\{-1,+1\}$, indexed by the time step $k=0,\ldots,N-1$ and the input coordinate $i=1,\ldots,m$, and represented by a bitstring $z\in\{0,1\}^{n_q}$ with $n_q=Nm$ through $\eps_{k,i}(z)=2z_{k,i}-1$. 
Each bitstring, or \emph{codeword}, therefore specifies one perturbation sequence and hence one deterministic rollout, and the \emph{codebook} of all $D=2^{n_q}$ codewords replaces the continuous noise space of standard MPPI.
We use $\PP$ also for the uniform measure on these codewords, which is the discrete approximation of the passive path measure when the sampling and physical noise are matched.

With zero nominal control, a codeword generates the deterministic rollout and cost
\begin{align}
x_{k+1}(z)&=x_k(z)+f(x_k(z))\dt\notag\\
&\quad+\sigma_s\sqrt{\dt}\,G(x_k(z))\eps_k(z),\label{eq:euler}\\
S(z)&=\phi(x_N(z))+\sum_{k=0}^{N-1}q(x_k(z))\dt,\label{eq:S}
\end{align}
where the rollout \eqref{eq:euler} starts from $x_0(z)=\bar{x}$. Since $q$ and $\phi$ are nonnegative and the ensemble is finite, $S(z)\in[0,S_{\max}]$ with $S_{\max}:=\max_zS(z)<\infty$.
Each codeword induces the control-input perturbation sequence $\delta u_k(z):=\sigma_s\eps_k(z)/\sqrt{\dt}$, the sampled deviation from the zero nominal control~\cite{williams17,williams18}. The codewords therefore index a finite set of candidate input sequences.

\begin{remark}\label{rem:decouple}
For the continuous-time interpretation, consider the isotropic case $R=\rho I_m$ and $\sigma=\sigma_sI_m$, for which A3 requires $\lambda=\rho\sigma_s^2$; we refer to this parameter choice as the \emph{matched point}. In the finite-ensemble MPPI formulation, however, $\lambda$ and $\sigma_s$ may be selected independently~\cite{williams18}. With zero nominal control, the importance-sampling correction vanishes, and the weights remain $e^{-S(z)/\lambda}$. The results below therefore target the exact discrete update for a fixed ensemble; consistency with~\eqref{eq:ulog} is asserted only at the matched point.
\end{remark}

For a fixed $\lambda>0$, define $g(z):=e^{-S(z)/\lambda}$ and the normalized Gibbs weights and first-input update
\begin{equation}\label{eq:update}
    \omega_\lambda(z):=\frac{g(z)}{\sum_{z'}g(z')},\qquad u_0^{(\lambda)}:=\frac{\sigma_s}{\sqrt{\dt}}\sum_z\omega_\lambda(z)\eps_0(z).
\end{equation}
This is the zero-nominal finite-ensemble MPPI update~\cite{williams18}. Define
\begin{equation*}
    a:=\E_{\PP}[g(z)],\qquad b_i:=\E_{\PP}\!\left[g(z)\ind\{z_{0,i}=1\}\right].
\end{equation*}
Because $\eps_{0,i}(z)=2\ind\{z_{0,i}=1\}-1$, linearity gives $\E_{\PP}[g(z)\eps_{0,i}(z)]=2b_i-a$. Consequently,
\begin{equation}\label{eq:ratio}
    u_{0,i}^{(\lambda)}=\frac{\sigma_s}{\sqrt{\dt}}\left(\frac{2b_i}{a}-1\right).
\end{equation}
Equation~\eqref{eq:ratio} is the ratio representation used by the quantum estimator.

\begin{assumption}[Weak approximation]\label{as:weak}
At the matched point, augment the passive dynamics with the accumulator $dy_t=q(x_t)\,dt$, $y_0=0$, discretized as $y_{k+1}(z)=y_k(z)+q(x_k(z))\dt$, $y_0(z)=0$, and set $\Phi(x,y):=e^{-(y+\phi(x))/\lambda}$.
The two-point Euler scheme is assumed to have weak order one for $\Phi$: for each fixed $\lambda>0$ there is a $C_\lambda<\infty$, independent of $\dt$, with
\begin{equation*}
\bigl|\psi(\bar x,0)
-\E_{\PP}[\Phi(x_N(z),y_N(z))]\bigr|
\le C_\lambda\dt
\end{equation*}
for all sufficiently small $\dt$. Such a bound holds under standard smoothness, growth, and localization conditions for simplified weak Euler schemes~\cite[Ch.~14]{kloedenplaten}. Since $\Phi(x_N(z),y_N(z))=e^{-S(z)/\lambda}=g(z)$, this reads $|a-\psi(\bar x,0)|\le C_\lambda\dt$; it relates the finite ensemble to~\eqref{eq:fk} only, and asserts no convergence of the finite-codebook update to the continuous optimal feedback.
\end{assumption}

The constant $C_\lambda$ need not be uniform as $\lambda\downarrow0$; Section~\ref{sec:results:lowtemp} therefore treats the low-temperature limit directly on the fixed finite ensemble.


\begin{remark}\label{rem:range}
Since $S(z)\in[0,S_{\max}]$, $g(z)\in[e^{-S_{\max}/\lambda},1]$ and $0<b_i<a\le1$. Thus, $a$ and $b_i$ are bounded expectations that can be encoded as quantum success probabilities, and $2b_i/a-1\in[-1,1]$. Let $S^*:=\min_zS(z)$. For any known shift $\underline S$ satisfying $0\le\underline S\le S^*$, replacing $S$ by $S-\underline S$ preserves the ratio in~\eqref{eq:ratio}, keeps the shifted weights in $(0,1]$, and changes the desirability to $e^{\underline S/\lambda}a\ge a$. The choice $\underline S=S^*$ gives normalization at the ensemble minimum.
\end{remark}
\section{Quantum Oracle Formulation}\label{sec:quantum}

Equation~\eqref{eq:ratio} reduces the MPPI update to estimating the bounded expectations $a$ and $b_i$. We now construct quantum circuits that encode these quantities as
success probabilities; the same rollout-cost circuit also provides the
threshold comparisons needed for minimum finding.

\subsection{Oracles and state preparation}\label{sec:quantum:oracles}

For background on qubits, oracles, and the circuit model, we refer to~\cite{nielsenchuang}. 
Amplitude estimation was introduced in~\cite{bhmt02}, and quantum Monte Carlo methods are reviewed in~\cite{montanaro15}. An $n_q$-qubit register stores the trajectory codeword $z$.
Applying a Hadamard gate to every qubit prepares the uniform superposition $D^{-1/2}\sum_z|z\rangle$, matching the uniform path measure $\PP$. Because the dynamics and cost are computed directly from $z$, no precomputed trajectory table, quantum random-access memory (QRAM), or nonuniform distribution loading is required.

\emph{Cost oracle:} A fixed-point reversible compilation of~\eqref{eq:euler} and~\eqref{eq:S}, using adders and multipliers on ancilla registers, the accumulator $y$ of Assumption~\ref{as:weak}, and uncomputation~\cite{bennett73}, realizes
\begin{equation}\label{eq:OS}
O_S|z\rangle|0\rangle=|z\rangle|S(z)\rangle .
\end{equation}
Work registers are omitted from the notation and returned to zero; implemented arithmetic accuracy is addressed in Assumption~\ref{as:arith}.

\emph{Threshold oracle:} For a classically supplied threshold $\gamma$, the cost oracle is followed by a reversible comparison that changes the phase of a codeword whenever $S(z)<\gamma$. Uncomputing the cost register gives $O_\gamma|z\rangle=(-1)^{\ind\{S(z)<\gamma\}}|z\rangle$, the oracle used by minimum finding.

\emph{State preparation:} Recall that $g(z)=e^{-S(z)/\lambda}\in(0,1]$. After computing $S(z)$, 
a controlled rotation of a one-qubit success register through $\theta(z)=2\arcsin\sqrt{g(z)}$, evaluated by polynomial approximation, produces the amplitudes
$\sqrt{1-g(z)}$ and $\sqrt{g(z)}$. The cost register is then uncomputed, giving
\begin{equation}\label{eq:Aprep}
    \calA|0^{n_q}\rangle|0\rangle
    ={}\frac{1}{\sqrt D}\sum_z|z\rangle
    \Bigl(\sqrt{1-g(z)}\,|0\rangle
    +\sqrt{g(z)}\,|1\rangle\Bigr).
\end{equation}
Measuring the success qubit in state $|1\rangle$ has probability
\begin{equation*}
    \PP_{\calA}(1) =\frac{1}{D}\sum_zg(z)=a.
\end{equation*}
To encode $b_i$, define $\calA_i$ by applying the same rotation only when the input bit $z_{0,i}=1$. Its success probability is
\begin{equation*}
    \PP_{\calA_i}(1)
    =\frac{1}{D}\sum_zg(z)\ind\{z_{0,i}=1\}
    =b_i.
\end{equation*}
Thus, amplitude estimation applied to $\calA$ and $\calA_i$ estimates precisely the denominator and numerators in~\eqref{eq:ratio}. The initial path distribution is generated by Hadamard gates rather than loaded from stored data, so the distribution-loading limitation identified in~\cite{herbert21} does not apply.

We measure query complexity by calls to the state-preparation oracles: one query is one application of $\calA$, $\calA_i$, or one of their inverses. A forward query coherently rolls out the trajectory indexed by $z$, computes its cost $S(z)$ and the corresponding rotation angle, applies the controlled rotation to the success qubit, and uncomputes all work registers. An inverse query reverses the same circuit at identical gate cost.

\begin{assumption}[Oracle arithmetic precision]\label{as:arith}
The implemented state-preparation circuits encode probabilities $\tilde g(z)\in[0,1]$ satisfying
\begin{equation*}
    \sup_z|\tilde g(z)-g(z)|\le\eta a
\end{equation*}
for some $\eta\ge0$. Consequently, the encoded success probabilities differ from $a$ and $b_i$ by at most $\eta a$. 
Any fixed-point precision and gate cost needed to enforce this condition must be included in a complete resource analysis; Section~\ref{sec:numerics} provides only an illustrative operation-count model.
\end{assumption}

\subsection{Estimation problems}\label{sec:quantum:problems}

We consider two tasks: finding a minimum-cost codeword (relevant to the low-temperature limit and cost normalization) and estimating the finite-temperature MPPI update. In Problem~\ref{prob:Q2}, the unindexed $\eps$ denotes the requested estimation accuracy and is distinct from the indexed perturbations $\eps_{k,i}$.

\begin{problem}[Minimum-cost codeword]\label{prob:Q1}
Given query access to $O_S$ and a prescribed tolerance $\mathrm{tol}\ge0$, return a codeword $\hat z$ satisfying
\begin{equation*}
    S(\hat z)\le S^*+\mathrm{tol}
\end{equation*}
with probability at least $1-\delta$, where $S^*=\min_zS(z)$ and $\delta\in(0,1)$. The corresponding first codebook input is $u_0(\hat z)=(\sigma_s/\sqrt{\dt})\,\eps_0(\hat z)$.
\end{problem}

\begin{problem}[Finite-temperature update]\label{prob:Q2}
Given query access to $\calA$, $\calA_i$, and their inverses, return $\hat u_0\in\Rset^m$ such that
\begin{equation*}
    \left\|\hat u_0-u_0^{(\lambda)}\right\|_\infty
    \le\frac{\eps\sigma_s}{\sqrt{\dt}}
\end{equation*}
with probability at least $1-\delta$, where $\eps\in(0,1]$ and $\delta\in(0,1)$.
\end{problem}

\section{Main Results}\label{sec:results}

\subsection{Finite-temperature update estimation}\label{sec:results:qae}

For this subsection, $g$ denotes the weight encoded by the oracle after any admissible cost shift from Remark~\ref{rem:range}, and $a=\E_{\PP}[g]$ is its corresponding desirability. Since \eqref{eq:ratio} depends only on the bounded ratios $b_i/a\in[0,1]$, each numerator requires additive accuracy proportional to $\eps a$ rather than relative accuracy.


\begin{lemma}[Ratio propagation]\label{lem:ratio}
Let $a\in(0,1]$, $b\in[0,a]$, $\eps\in(0,1]$, and $\kappa>1$. If
$|\hat a-a|\le\eps a/\kappa$ and $|\hat b-b|\le\eps a/\kappa$, then
\begin{equation*}
\left|
\left(\frac{2\hat b}{\hat a}-1\right)
-\left(\frac{2b}{a}-1\right)
\right|
\le\frac{4\eps}{\kappa-1}.
\end{equation*}
\end{lemma}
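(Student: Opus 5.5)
The plan is a direct algebraic perturbation bound. The target difference equals $2(\hat b/\hat a - b/a)$, so it suffices to show $|\hat b/\hat a - b/a|\le 2\eps/(\kappa-1)$. First I will make sure the estimated ratio is well defined and get a lower bound on $\hat a$. Since $|\hat a-a|\le\eps a/\kappa$ and $\eps\le1$,
\begin{equation*}
\hat a\ge a(1-\eps/\kappa)\ge a(1-1/\kappa)=a(\kappa-1)/\kappa>0,
\end{equation*}
using $\kappa>1$. Hence $\hat a\neq0$ and $1/\hat a\le \kappa/(a(\kappa-1))$.

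Next I will split the ratio error with the common-denominator identity
\begin{equation*}
\frac{\hat b}{\hat a}-\frac{b}{a}=\frac{a(\hat b-b)-b(\hat a-a)}{a\hat a}.
\end{equation*}
The numerator is bounded by the triangle inequality. Using $0\le b\le a$, we get
\begin{equation*}
|a(\hat b-b)|+|b(\hat a-a)|\le a\cdot\eps a/\kappa+a\cdot \eps a/\kappa=2\eps a^2/\kappa.
\end{equation*}
Dividing by $a\hat a\ge a^2(\kappa-1)/\kappa$ gives
\begin{equation*}
\left|\frac{\hat b}{\hat a}-\frac{b}{a}\right|\le\frac{2\eps a^2/\kappa}{a^2(\kappa-1)/\kappa}=\frac{2\eps}{\kappa-1}.
\end{equation*}
Multiplying by $2$ then gives the claimed bound $4\eps/(\kappa-1)$.

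There is no real obstacle here. The only point needing care is the denominator step: the lower bound on $\hat a$ must use $\eps\le1$ together with $\kappa>1$, which is exactly why the hypothesis $\kappa>1$ appears and why the bound degrades as $\kappa\downarrow1$. I will also note that the assumption $b\le a$ is used only to bound the $b(\hat a-a)$ term, and that no property of $\hat b$ (such as $\hat b\in[0,\hat a]$) is required.
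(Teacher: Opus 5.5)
Your proof is correct and is essentially the paper's argument: it uses the same lower bound $\hat a\ge\frac{\kappa-1}{\kappa}a$ (from $\eps\le1$, $\kappa>1$) and the same error decomposition. The paper writes this as $\frac{\hat b-b}{\hat a}+\frac{b}{a}\cdot\frac{a-\hat a}{\hat a}$, which is your common-denominator identity split into two terms, each bounded by $\eps/(\kappa-1)$ using $b/a\le1$.
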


\begin{proof}
Since $\eps\le1$, $|\hat a-a|\le a/\kappa$, so
$\hat a\ge\tfrac{\kappa-1}{\kappa}a>0$. Writing
$\frac{\hat b}{\hat a}-\frac{b}{a}
=\frac{\hat b-b}{\hat a}+\frac{b}{a}\cdot\frac{a-\hat a}{\hat a}$
and using $b/a\le1$, each term is at most
$(\eps a/\kappa)\big/\!\bigl(\tfrac{\kappa-1}{\kappa}a\bigr)
=\eps/(\kappa-1)$. Summing and multiplying by two gives the result.
\end{proof}

\begin{theorem}[Quantum query complexity]\label{thm:qae}
Fix the finite ensemble, $\lambda>0$, an accuracy $\eps\in(0,1]$, and a slack parameter $\kappa\ge5$. Suppose Assumption~\ref{as:arith} holds with $\eta\le\eps/(2\kappa+1)$. For any failure probability $\delta\in(0,1)$, there is a quantum algorithm that 
solves Problem~\ref{prob:Q2}, returning $\hat u_0$ with $\|\hat u_0-u_0^{(\lambda)}\|_\infty\le\frac{4}{\kappa-1}\frac{\eps\sigma_s}{\sqrt{\dt}}\le\frac{\eps\sigma_s}{\sqrt{\dt}}$, using
\begin{equation}\label{eq:qcomplexity}
O\!\left(
\frac{m}{\eps\sqrt a}
\log\frac{m}{\delta}
\right)
\end{equation}
queries to $\calA$, $\calA_i$, and their inverses.
\end{theorem}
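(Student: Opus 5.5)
The plan is to estimate the denominator $a$ and each numerator $b_i$ to additive accuracy of order $\eps a/\kappa$ with amplitude estimation, and then invoke Lemma~\ref{lem:ratio} coordinatewise. The implemented oracles encode $\tilde a$ and $\tilde b_i$ rather than $a$ and $b_i$. By Assumption~\ref{as:arith} with $\eta\le\eps/(2\kappa+1)$,
\[
|\tilde a-a|\le\eta a
\quad\text{and}\quad
|\tilde b_i-b_i|\le\eta a .
\]
I will therefore split the error budget. The estimation error on $\tilde a$ and $\tilde b_i$ is allotted roughly $\eps a/(2\kappa)$, and the arithmetic bias $\eta a$ absorbs the remainder, so the total error stays at most $\eps a/\kappa$. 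The exact split will be chosen so that the constant $2\kappa+1$ works out; for instance, target estimation error $\eps a\,\kappa'^{-1}$ with a slightly adjusted $\kappa'$, and check $\eta a+(\text{estimation error})\le\eps a/\kappa$. The condition $\kappa\ge5$ then gives $4/(\kappa-1)\le1$ in the final bound.

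\textbf{Step 1 (amplitude estimation with additive error).} I will use the standard bound of~\cite{bhmt02}. With $M$ queries, amplitude estimation returns $\hat p$ with
\[
|\hat p-p|\le 2\pi\frac{\sqrt{p(1-p)}}{M}+\frac{\pi^2}{M^2}
\]
with probability at least $8/\pi^2$. The median of $O(\log(1/\delta'))$ independent repetitions boosts this to probability $1-\delta'$.
\begin{itemize}
\item For $\tilde a$, with $p=\tilde a\le(1+\eta)a\le2a$: taking $M=C\kappa/(\eps\sqrt a)$ makes the first term $O(\eps a/\kappa\cdot 1/C)$. The second term is $O(\eps^2a/(C^2\kappa^2))$, which is smaller. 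So the target accuracy is met with $O(\kappa/(\eps\sqrt a))$ queries.
\item For $\tilde b_i$, $p=\tilde b_i\le b_i+\eta a\le2a$ as well, so the same $M$ suffices.
\end{itemize}
A key point is that the required accuracy $\eps a$ scales with $a$ while the standard deviation scales with $\sqrt{p}\lesssim\sqrt a$. This is what produces $1/(\eps\sqrt a)$ rather than $1/(\eps a)$.

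\textbf{Step 2 (union bound and counting).} Run the $m+1$ estimation tasks, one for $\tilde a$ and one for each $\tilde b_i$, each with failure probability $\delta'=\delta/(m+1)$. The total query count is
\[
(m+1)\cdot O\!\left(\frac{\kappa}{\eps\sqrt a}\right)\cdot O\!\left(\log\frac{m}{\delta}\right).
\]
Treating $\kappa$ as a constant, this is~\eqref{eq:qcomplexity}. On the success event, Lemma~\ref{lem:ratio} applies to each coordinate; its hypothesis $b\le a$ holds by Remark~\ref{rem:range}. Multiplying by $\sigma_s/\sqrt{\dt}$ gives the $\ell_\infty$ bound.

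\textbf{Main obstacle.} $a$ is unknown, so $M$ cannot be set directly from it. To handle this, I would first obtain a constant-factor estimate of $a$ using exponentially increasing $M$ (the standard trick from~\cite{bhmt02,montanaro15}). Amplitude estimation with $M$ queries distinguishes $a\gtrsim1/M^2$ from $0$, so doubling $M$ until the estimate is $\Omega(1/M^2)$ gives $\hat a_0\in[a/2,2a]$ with high probability. This costs $O(a^{-1/2}\log(1/\delta))$ queries in total because the geometric sum is dominated by its last term. I then set $M$ from $\hat a_0$. This preliminary stage is absorbed into the stated bound. The remaining bookkeeping is the constant matching between $\eta$, $\kappa$, and the estimation tolerance.
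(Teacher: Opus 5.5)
Your proposal is correct and follows the paper's proof: you split the $\eps a/\kappa$ budget between estimation error and the arithmetic bias $\eta a$, use the BHMT bound with $p\lesssim a$ to get $O(1/(\eps\sqrt a))$ queries per estimate, apply median amplification with a union bound over the $m+1$ estimates, and finish with Lemma~\ref{lem:ratio}. The only difference is how the unknown scale is handled: your preliminary constant-factor estimate $\hat a_0$ plays the role of the paper's Aaronson--Rall relative-accuracy estimate $\hat a$ of $\tilde a$, which the paper uses both as the final denominator and to set the numerator targets $(\eps/c)\hat a$ with $c=2\kappa+1$ (note that the first nonzero output of your doubling loop only locates $M=\Theta(a^{-1/2})$; you need one further run at a constant multiple of that $M$, as in BHMT's counting routine, to obtain $\hat a_0\in[a/2,2a]$).
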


\begin{proof}
Write $c:=2\kappa+1$ and let $\tilde a:=\E_{\PP}[\tilde g]$ and $\tilde b_i:=\E_{\PP}[\tilde g(z)\ind\{z_{0,i}=1\}]$ be the success probabilities encoded by the implemented circuits, so that Assumption~\ref{as:arith} gives $|\tilde a-a|\le\eta a$ and $|\tilde b_i-b_i|\le\eta a$ with $\eta\le\eps/c$. Since Lemma~\ref{lem:ratio} converts per-estimate accuracy $\eps a/\kappa$ into output error $4\eps/(\kappa-1)$, and $4/(\kappa-1)\le1$ precisely when $\kappa\ge5$, it suffices to estimate $a$ and every $b_i$ to additive error $\eps a/\kappa$; all accuracy parameters below are set to the single value $\eps/c$.

We first estimate the denominator. The quantum-Fourier-transform-free (QFT-free) approximate-counting algorithm of~\cite{ar20}, applied to $\tilde a$ in its amplitude-estimation form with relative accuracy parameter $\eps/c$, returns $\hat a$ with $|\hat a-\tilde a|\le(\eps/c)\tilde a$ at constant success probability; it needs no prior lower bound on $\tilde a$, since it first locates the scale by geometrically increasing Grover powers. Combining estimation and arithmetic errors, with $\eta\le\eps/c\le1/c$,
\begin{equation*}
|\hat a-a|
\le\frac{\eps}{c}(1+\eta)a+\eta a
\le\frac{\eps}{c}\Bigl(2+\frac1c\Bigr)a
\le\frac{\eps a}{\kappa},
\end{equation*}
where the last inequality is $c^2-2\kappa c-\kappa\ge0$, satisfied at $c=2\kappa+1$ with margin $\kappa+1$. 
Since $\eps\le1$, the budget also gives $\tfrac{\kappa-1}{\kappa}a\le\hat a\le\tfrac{\kappa+1}{\kappa}a$. Moreover, $\tilde a\ge(1-\eta)a\ge(1-\tfrac1c)a=({2\kappa a})/({2\kappa+1})$ while $\tilde a\le(1+\eta)a\le2a$; hence $\tilde a=\Theta(a)$, and the query cost $O(1/(\eps\sqrt{\tilde a}))$ is $O(1/(\eps\sqrt a))$.

We next estimate each numerator $\tilde b_i$ to additive error $(\eps/c)\hat a$. With $\hat a\le\tfrac{\kappa+1}{\kappa}a$ and arithmetic leakage $\eta a\le(\eps/c)a$,
\begin{equation*}
|\hat b_i-b_i|
\le\frac{\eps}{c}\,\hat a+\eta a
\le\frac{\kappa+1}{\kappa}\cdot\frac{\eps a}{c}+\frac{\eps a}{c}
=\frac{2\kappa+1}{\kappa}\cdot\frac{\eps a}{c}
=\frac{\eps a}{\kappa},
\end{equation*}
with equality by the choice $c=2\kappa+1$: the numerator chain is the binding one, and the hypothesis $\eta\le\eps/(2\kappa+1)$ is exactly what makes it close. For $M$ state-preparation queries the guarantee of~\cite{bhmt02} has error of order $\sqrt{\tilde b_i}/M+1/M^2$, so this target requires $M=O\bigl(\sqrt{\tilde b_i}/(\eps\hat a)+1/\sqrt{\eps\hat a}\bigr)=O(1/(\eps\sqrt a))$, using $\tilde b_i\le\tilde a=O(a)$, $\hat a=\Theta(a)$, and $\eps\le1$.

Repeat the denominator estimate and each of the $m$ numerator estimates independently $O(\log((m+1)/\delta))$ times and take the median. Every single-run success probability is a constant greater than $1/2$ (at least $2/3$ for~\cite{ar20} and $8/\pi^2$ for~\cite{bhmt02}), so median amplification makes each final estimate fail with probability at most $\delta/(m+1)$, and a union bound gives all $m+1$ error bounds simultaneously with probability at least $1-\delta$, at total cost~\eqref{eq:qcomplexity}, into which the constant $\kappa$ is absorbed. On this joint event, Lemma~\ref{lem:ratio} gives $|(2\hat b_i/\hat a-1)-(2b_i/a-1)|\le4\eps/(\kappa-1)$ for every $i$; multiplying by $\sigma_s/\sqrt{\dt}$ and maximizing over $i$ yields $\|\hat u_0-u_0^{(\lambda)}\|_\infty\le\tfrac{4}{\kappa-1}\cdot\tfrac{\eps\sigma_s}{\sqrt{\dt}}\le\tfrac{\eps\sigma_s}{\sqrt{\dt}}$, proving the guarantee of Problem~\ref{prob:Q2}.
\end{proof}

For a scale-independent classical comparison, define
$    g'(z):=e^{-(S(z)-S^*)/\lambda},
    \qquad
    a':=\E_{\PP}[g'(z)].$
Then $0<g'(z)\le1$, $\max_zg'(z)=1$, and the ratios in~\eqref{eq:ratio} are unchanged.

\begin{proposition}[Classical rollout complexity]\label{prop:classical}
Independent classical rollouts solve Problem~\ref{prob:Q2} using
$O\!\left(
\frac{1}{\eps^2a'}
\log\frac{m}{\delta}
\right)$
rollouts.
\end{proposition}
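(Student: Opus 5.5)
The plan is to run the classical analogue of the proof of Theorem~\ref{thm:qae}: sample codewords and estimate the denominator and the numerators with shared samples. The quantities are
\begin{equation*}
a'=\E_{\PP}[g'],\qquad b_i'=\E_{\PP}\bigl[g'\,\ind\{z_{0,i}=1\}\bigr],
\end{equation*}
which have the same ratio as in~\eqref{eq:ratio}. Draw $K$ codewords $z^{(1)},\dots,z^{(K)}$ independently and uniformly from $\PP$, and compute each rollout $S(z^{(j)})$ by~\eqref{eq:euler}--\eqref{eq:S}. Form $\hat a'=K^{-1}\sum_j g'(z^{(j)})$ and $\hat b_i'=K^{-1}\sum_j g'(z^{(j)})\ind\{z^{(j)}_{0,i}=1\}$.

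The shift $S^*$ in $g'$ is the one point that needs care. I would note that $S^*$ cancels in every ratio $\hat b_i'/\hat a'$. The estimator can therefore use the unshifted weights $g$, or any available shift, with identical output. The quantities $a'$ and $b_i'$ are used only in the analysis. Because $g'\in(0,1]$, both $a'$ and $b_i'$ are means of variables bounded in $[0,1]$.

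The key step is a variance-sensitive concentration bound. Hoeffding alone would give $1/(\eps^2a'^2)$, which is too weak. Instead, the variances satisfy
\begin{equation*}
\operatorname{Var}(g')\le\E[g'^2]\le\E[g']=a',
\end{equation*}
and likewise the numerator variance is at most $b_i'\le a'$. Bernstein's inequality then gives $|\hat a'-a'|\le\eps a'/\kappa$ with probability at least $1-\delta/(m+1)$ once
\begin{equation*}
K\gtrsim\Bigl(\frac{\kappa^2a'}{\eps^2a'^2}+\frac{\kappa}{\eps a'}\Bigr)\log\frac{m+1}{\delta},
\end{equation*}
which is $O\bigl((\eps^2a')^{-1}\log(m/\delta)\bigr)$ since $\eps\le1$. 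The same $K$ gives each $|\hat b_i'-b_i'|\le\eps a'/\kappa$. Equivalently, one could apply a multiplicative Chernoff bound to the denominator, and Bernstein with additive target $\eps a'/\kappa$ to the numerators. A union bound over the $m+1$ events gives all of them simultaneously with probability at least $1-\delta$. Lemma~\ref{lem:ratio} with $\kappa=5$ then yields coordinatewise error at most $\eps$ in $2b_i/a-1$, and multiplying by $\sigma_s/\sqrt{\dt}$ proves the guarantee of Problem~\ref{prob:Q2}.

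The main obstacle is obtaining the $1/a'$ rather than $1/a'^2$ dependence, which rests entirely on the bound $\E[g'^2]\le a'$. A second point to watch is that the estimator never needs $a'$ in advance: the sample size $K$ is stated in terms of $a'$, but the update itself is shift-invariant. If a data-independent stopping rule were required, I would fall back on a Dagum--Karp--Luby--Ross style sequential stopping rule for the denominator, which attains the same $O\bigl((\eps^2a')^{-1}\log(m/\delta)\bigr)$ order without prior knowledge of $a'$.
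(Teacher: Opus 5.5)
Your proof is correct and is essentially the paper's argument: the bound $\operatorname{Var}(g')\le\E[(g')^2]\le a'$ (and likewise for each numerator), Bernstein's inequality at additive target $\eps a'/\kappa$ on a single shared batch of rollouts, a union bound over the $m+1$ estimates, and Lemma~\ref{lem:ratio}. One caveat concerns only your optional closing aside, not the proof: a Dagum--Karp--Luby--Ross stopping rule reaches the $1/(\eps^2a')$ order only when run on the $S^*$-normalized weights $g'$, so it needs $S^*$ (or a lower bound within $O(\lambda)$ of it). On the unshifted weights $g$ its cost scales with $1/a$, and your fixed-$K$ argument, like the paper's, implicitly chooses $K$ from knowledge of $a'$.
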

\begin{proof}
Because $0\le g'\le1$,
$\operatorname{Var}(g')\le\E_{\PP}[(g')^2]\le a'$, and the same bound holds for $g'(z)\ind\{z_{0,i}=1\}$. Fix any $\kappa\ge5$. Bernstein's inequality therefore estimates the denominator and every numerator to additive error $\eps a'/\kappa$ using the stated number of samples, the constant $\kappa$ being absorbed into the $O(\cdot)$. 
One rollout supplies the weight and all input bits, so the same batch estimates every component. A union bound over the $m+1$ estimates and Lemma~\ref{lem:ratio} then yield the guarantee in Problem~\ref{prob:Q2}.
\end{proof}

\begin{remark}[Scalar lower bounds]\label{rem:lower}
For near-binary weights, the scalar problem contains Bernoulli approximate counting. Classical estimation requires $\Omega(1/(\eps^2a'))$ samples, whereas quantum estimation requires $\Omega(1/(\eps\sqrt{a'}))$ queries~\cite{nayakwu99}.
Thus, both dependencies are tight below exhaustive-evaluation threshold. The linear factor $m$ in~\eqref{eq:qcomplexity} results from estimating the components separately and need not be optimal.
\end{remark}

\subsection{Low-temperature limit}\label{sec:results:lowtemp}

We next fix the finite ensemble and vary only $\lambda$. Let
$\calZ^*:=\arg\min_zS(z)$ and
$M_*:=|\calZ^*|$. We use
$\TV(p,q):=\frac12\sum_z|p(z)-q(z)|$ for the total-variation distance
between distributions on the codebook.

\begin{lemma}[Concentration on minimum-cost paths]\label{lem:conc}
Suppose $1\le M_*<D$ and define the positive ensemble gap
$    \Delta:=
    \min_{z\notin\calZ^*}
    \bigl(S(z)-S^*\bigr)>0.$
Then, for every $\lambda>0$,
$    \TV\!\left(
    \omega_\lambda,
    \operatorname{Unif}(\calZ^*)
    \right)
    \le
    \frac{D-M_*}{M_*}e^{-\Delta/\lambda}.$
Moreover,
\begin{equation}\label{eq:ulimit}
    \left\|u_0^{(\lambda)}-u_0^{(0)}\right\|_\infty
    \le\frac{2\sigma_s}{\sqrt{\dt}}\,
    \frac{(D-M_*)\,e^{-\Delta/\lambda}}{M_*},
\end{equation}
where $u_0^{(0)}:=(\sigma_s/(M_*\sqrt{\dt})) \sum_{z\in\calZ^*}\eps_0(z)$.
\end{lemma}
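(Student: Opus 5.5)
The plan is to compute the total-variation distance explicitly, using the fact that the Gibbs weights are invariant under shifting every cost by the same constant. Shift by $S^*$ and write $g'(z)=e^{-(S(z)-S^*)/\lambda}$. Then $g'(z)=1$ for $z\in\calZ^*$ and $g'(z)\le e^{-\Delta/\lambda}$ for $z\notin\calZ^*$. Let $Z:=\sum_z g'(z)=M_*+r$, where $r:=\sum_{z\notin\calZ^*}g'(z)$ satisfies $0<r\le(D-M_*)e^{-\Delta/\lambda}$. In these terms $\omega_\lambda(z)=1/Z$ on $\calZ^*$ and $\omega_\lambda(z)=g'(z)/Z$ off $\calZ^*$.

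Next I compute the TV distance directly. On $\calZ^*$, each term is $|1/Z-1/M_*|=r/(M_*Z)$, and summing over the $M_*$ minimizers gives $r/Z$. Off $\calZ^*$ the uniform distribution vanishes, so those terms sum to $\sum_{z\notin\calZ^*}g'(z)/Z=r/Z$. Hence
\begin{equation*}
\TV=\tfrac12\bigl(r/Z+r/Z\bigr)=\frac{r}{M_*+r}\le\frac{r}{M_*}\le\frac{D-M_*}{M_*}e^{-\Delta/\lambda},
\end{equation*}
which is the first claim.

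For \eqref{eq:ulimit}, note that $u_0^{(0)}$ is exactly $(\sigma_s/\sqrt{\dt})\sum_z \operatorname{Unif}(\calZ^*)(z)\,\eps_0(z)$. Componentwise,
\begin{equation*}
|u_{0,i}^{(\lambda)}-u_{0,i}^{(0)}|=\frac{\sigma_s}{\sqrt{\dt}}\Bigl|\sum_z(\omega_\lambda(z)-\operatorname{Unif}(\calZ^*)(z))\,\eps_{0,i}(z)\Bigr|\le\frac{\sigma_s}{\sqrt{\dt}}\sum_z|\omega_\lambda-\operatorname{Unif}|,
\end{equation*}
where the last step uses $|\eps_{0,i}|=1$. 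The final sum equals $2\,\TV$, so substituting the bound from the first part gives the factor $2$ in \eqref{eq:ulimit}. Taking the maximum over $i$ finishes the proof.

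No step is a genuine obstacle. The only care needed is the exact bookkeeping in the TV sum: the mass deficit on $\calZ^*$ equals the mass placed off $\calZ^*$, which is why the bound comes out as $r/Z$ rather than a cruder quantity. The hypothesis $M_*<D$ guarantees that $\Delta$ is well defined and positive.
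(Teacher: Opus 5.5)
Your proof is correct and follows the paper's argument. You shift by $S^*$, bound the off-minimizer mass by $(D-M_*)e^{-\Delta/\lambda}$, identify the total-variation distance with $r/(M_*+r)\le r/M_*$, and apply $|\E_p h-\E_q h|\le 2\TV(p,q)\|h\|_\infty$ with $h=\eps_{0,i}$; the only difference is that you explicitly verify the identity the paper merely asserts, namely that the mass deficit on $\calZ^*$ equals the mass placed off $\calZ^*$.
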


\begin{proof}
After subtracting $S^*$, the total unnormalized weight outside $\calZ^*$ satisfies
$T_\lambda:=\sum_{z\notin\calZ^*}e^{-(S(z)-S^*)/\lambda}
\le(D-M_*)e^{-\Delta/\lambda}$. The probability assigned outside $\calZ^*$ is
$T_\lambda/(M_*+T_\lambda)\le T_\lambda/M_*$, which is also the total-variation distance from the uniform distribution on $\calZ^*$. Applying
$|\E_ph-\E_qh|\le2\TV(p,q)\|h\|_\infty$
with $h(z)=\eps_{0,i}(z)$ proves~\eqref{eq:ulimit}.
\end{proof}

\begin{corollary}[Unique minimum and quantum search]\label{cor:dh}
Suppose the ensemble has a unique minimizer $z^*$. Then
$u_0^{(\lambda)}\to(\sigma_s/\sqrt{\dt})\,\eps_0(z^*)$ as
$\lambda\to0$. With an exact cost oracle, D\"urr--H\o yer minimum
finding~\cite{durr96} solves Problem~\ref{prob:Q1} with
$\mathrm{tol}=0$ using $O(\sqrt D\log(1/\delta))$ expected oracle
queries, compared with $\Theta(D)$ worst-case classical evaluations. The quantum dependence on $D$ is optimal for unstructured search~\cite{bbbv97}.
\end{corollary}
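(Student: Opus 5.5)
The limit statement follows from Lemma~\ref{lem:conc} applied with $M_*=1$ and $\calZ^*=\{z^*\}$. Whenever $D\ge2$ the gap $\Delta$ is positive, and \eqref{eq:ulimit} gives
\begin{equation*}
\left\|u_0^{(\lambda)}-\tfrac{\sigma_s}{\sqrt{\dt}}\eps_0(z^*)\right\|_\infty\le\tfrac{2\sigma_s}{\sqrt{\dt}}(D-1)e^{-\Delta/\lambda}.
\end{equation*}
The right-hand side tends to zero as $\lambda\downarrow0$, since the ensemble, $\sigma_s$, and $\dt$ are fixed. The trivial case $D=1$ needs no argument.

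For the quantum query claim, I would build the threshold oracle $O_\gamma$ of Section~\ref{sec:quantum:oracles} from $O_S$. This takes two applications of $O_S$ (compute, compare, uncompute), so query counts change only by a constant factor. Then I run D\"urr--H\o yer minimum finding~\cite{durr96}: keep a current best index $y$, and repeatedly call the Boyer--Brassard--H\o yer--Tapp search with unknown number of marked items using threshold $\gamma=S(y)$. Each call either returns a codeword of strictly lower cost or runs out. The cited analysis gives expected $O(\sqrt D)$ queries to reach the minimizer with probability at least $1/2$. Because the oracle is exact, the output $\hat z$ equals $z^*$ exactly, so $\mathrm{tol}=0$ is met.

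Next I amplify the success probability. I run $O(\log(1/\delta))$ independent trials and return the candidate of smallest cost. Evaluating one cost classically, or via one use of $O_S$ with measurement, costs $O(1)$ queries, so the comparison adds negligible overhead. The returned candidate is wrong only if every trial fails. This gives $O(\sqrt D\log(1/\delta))$ expected queries. Uniqueness makes the returned codeword identify the limiting control $u_0(\hat z)=(\sigma_s/\sqrt{\dt})\eps_0(z^*)$.

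For the classical $\Theta(D)$ bound, the upper bound is exhaustive evaluation. For the lower bound, I use an adversary argument. Any deterministic algorithm that evaluates fewer than $D$ codewords leaves some $z$ unseen, and the adversary can assign that $z$ the strictly smallest cost, keeping it unique. A randomized algorithm needs $\Omega(D)$ evaluations by Yao's principle applied to a uniformly placed minimum. Quantum optimality follows by reduction from unstructured search. A cost function with $S=0$ at a single marked item and $S=1$ elsewhere turns minimum finding into search, so the $\Omega(\sqrt D)$ lower bound of~\cite{bbbv97} transfers. The main subtlety is the expected-versus-worst-case bookkeeping in D\"urr--H\o yer. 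I would handle it by Markov's inequality to cut each trial at twice its expected length, giving constant success per truncated trial before amplification.
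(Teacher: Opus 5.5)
Your proposal is correct and follows the paper's route. The limit is exactly Lemma~\ref{lem:conc} with $M_*=1$, so that $u_0^{(0)}=(\sigma_s/\sqrt{\dt})\,\eps_0(z^*)$ and the error decays like $(D-1)e^{-\Delta/\lambda}$; the query claims rest on the cited D\"urr--H\o yer and BBBV results, which the paper invokes without further argument, and your Markov truncation with best-of-$O(\log(1/\delta))$ repetition, adversary/Yao bound, and single-marked-item reduction correctly fill these in (the truncation even gives a worst-case rather than merely expected query bound).
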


\begin{remark}
If $M_*>1$, the limiting MPPI update averages the first perturbations
of all minimum-cost codewords; minimum finding still returns one
minimizer and supplies $S^*$ for normalization, but does not by itself
compute this average.
\end{remark}

\subsection{Normalization and implementation cost}\label{sec:results:cost}

Let $g_{\max}=e^{-S^*/\lambda}$ and $a'=a/g_{\max}\ge M_*/D$, so shifting every cost by $S^*$ preserves the MPPI ratios and raises the desirability from $a$ to $a'$. The pipeline obtains this shift from Problem~\ref{prob:Q1} at $\mathrm{tol}=0$, whose successful output is exactly $S^*$. For $\mathrm{tol}>0$, the conservative choice $\underline S:=\max\{0,\widehat S-\mathrm{tol}\}$, with $\widehat S:=S(\hat z)$, satisfies $0\le\underline S\le S^*$ on the success event, so Remark~\ref{rem:range} applies verbatim: the ratios are unchanged and the shifted desirability lies in $[e^{-\mathrm{tol}/\lambda}a',a']$. The implemented rotation clamps the encoded probability at one, which is inactive on the success event and keeps the circuit valid otherwise; the search failure probability is charged to $\delta$.

Classical ratio estimation is invariant to this scaling and requires no preliminary search. Quantum amplitude estimation is not scale invariant, since its query cost depends on the success probability, and may therefore use either
\begin{equation*}
\begin{aligned}
Q_{\mathrm{raw}}
&=
O\!\left(
\frac{m}{\eps\sqrt a}
\log\frac{m}{\delta}
\right),\\
Q_{\mathrm{norm}}
&=
O\!\left(
\sqrt D\log\frac1\delta
+
\frac{m}{\eps\sqrt{a'}}
\log\frac{m}{\delta}
\right).
\end{aligned}
\end{equation*}
The normalized strategy is beneficial only when the reduction in amplitude-estimation queries compensates for the minimum-finding cost.

Let $C_S$ upper-bound the cost of one coherent rollout query, including reversible arithmetic and uncomputation, 
and let $c_S$ be the cost of one classical rollout in the same computational units; the ratio $C_S/c_S$ is an operation-count proxy, since $T$ gates and classical arithmetic operations are not equivalent physical units. Suppressing logarithmic factors, the leading-order cost proxies are
\begin{equation*}
\begin{aligned}
T_Q &\lesssim C_S\min\left\{
\frac{m}{\eps\sqrt a},
\sqrt D+\frac{m}{\eps\sqrt{a'}}
\right\},\\
T_C &\lesssim
c_S\min\left\{D,\frac{1}{\eps^2a'}\right\}.
\end{aligned}
\end{equation*}
A modeled advantage of this quantum implementation therefore requires
\begin{equation}\label{eq:crossover}
\frac{C_S}{c_S}
\min\left\{
\frac{m}{\eps\sqrt a},
\sqrt D+\frac{m}{\eps\sqrt{a'}}
\right\}
\lesssim
\min\left\{D,\;\frac{1}{\eps^2a'}\right\}.
\end{equation}

If the Monte Carlo branch is nonsaturated, $1/(\eps^2a')<D$, and $S^*$ is available or inexpensive, this reduces to~$\frac{1}{\eps\sqrt{a'}} \gtrsim m\frac{C_S}{c_S}.$
Thus, query advantage alone is insufficient; the coherent rollout must
also be inexpensive enough relative to its classical counterpart. The
numerical resource model in Section~\ref{sec:numerics} evaluates this condition.


\section{Numerical Validation}\label{sec:numerics}

We consider a fully enumerable planar single-integrator problem with $f\equiv0$, $G=I_2$, and $m=2$. The running and terminal costs are $q(x)=c_g\|x-x_g\|_2^2+ c_o e^{-\|x-x_o\|_2^2/(2r^2)},\qquad \phi(x)=c_T\|x-x_g\|_2^2.$

We set $\bar x=(0,0)$, $x_g=(1,0)$, $x_o=(0.5,0)$, $r=0.15$, $T=1$, $N=4$, $\dt=0.25$, $\sigma_s=0.4$, $R=I_2$, and $(c_g,c_T,c_o)=(1,10,20)$. The matched temperature is $\lambda=\sigma_s^2=0.16$. Thus, $n_q=8$ and all $D=256$ trajectories can be enumerated.

\emph{Exact ensemble and minimum finding:} Enumeration gives $S^*=3.058$, gap $\Delta=0.035$, and two symmetric minimizers with $\eps_0=(+1,+1)$ and $(+1,-1)$. Hence, $u_0^{(0)}=(0.800,0)$.
At the matched temperature, $u_0^{(\lambda)}=(0.800,0)$ to the reported precision, with $b_1/a=0.99999$ and $b_2/a=0.500$, illustrating the multiple-minimizer limit in Corollary~\ref{cor:dh}. The raw and normalized desirabilities are $a=7.1\times10^{-11}$ and $a'=0.0141$, respectively, while the effective sample size $\mathrm{ESS}:=(\sum_zg(z))^2/\sum_zg(z)^2$ is $3.95$ of $256$. A temperature sweep confirms the exponential concentration of Lemma~\ref{lem:conc}.

Over $2{,}000$ simulated runs, D\"urr--H\o yer minimum finding returned a true minimizer using a mean of $11.8$ Grover threshold-oracle calls and $17.5$ classical threshold evaluations. Because the minimum is not unique, each run returns one minimizer rather than their averaged first input.

\begin{figure}[t]
\centering
\includegraphics[width=\columnwidth]
{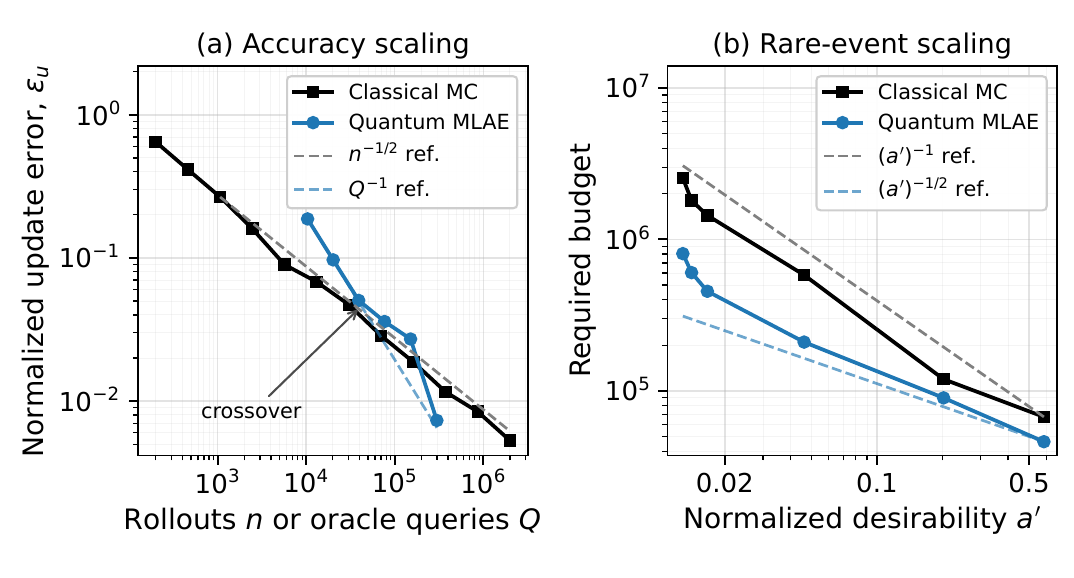}
\caption{Query-level validation. The normalized update error is $\eps_u:=\|\hat u_0-u_0^{(\lambda)}\|_\infty/(\sigma_s/\sqrt{\dt})$. (a) $\eps_u$ versus classical rollouts $n$ or quantum queries $Q$ at $a'=0.0141$. (b) Budget required for $\eps_u=5\times10^{-3}$ as $a'$ varies. 
Dashed lines show the predicted $n^{-1/2}$, $Q^{-1}$, $(a')^{-1}$, and $(a')^{-1/2}$ rates.}
\label{fig:qae}
\end{figure}

\emph{Finite-temperature estimation:}
Figure~\ref{fig:qae} compares the classical ratio estimator with median-of-three maximum-likelihood amplitude estimation (MLAE)~\cite{suzuki20}. MLAE estimates the same probabilities as Theorem~\ref{thm:qae}, although it is not the approximate-counting algorithm used in its proof. Here, $Q$ counts all oracle calls used to estimate $a,b_1,\ldots,b_m$, including the median-of-three repetitions.
The value $S^*$ is treated as known, so the separate normalization-search cost is excluded. 
In Fig.~\ref{fig:qae}(a), the curves cross near $\eps_u=5\times10^{-2}$; at the reported operating point, MLAE uses $2.99\times10^5$ queries versus approximately $1.1\times10^6$ classical rollouts at equal error.
In Fig.~\ref{fig:qae}(b), the observed classical dependence is $(a')^{-0.95}$, 
while the quantum budget scales between the two reference rates, with fitted exponent $-0.71$.
Since $D=256$, exhaustive enumeration remains cheapest for this instance; the experiment is intended to validate the estimator scalings relevant when $D=2^{Nm}$ is too large to enumerate.

\emph{Illustrative resource model:}
Table~\ref{tab:resource} summarizes a $w=24$ fixed-point model including reversible rollout, weight and rotation evaluation, and uncomputation~\cite{bennett73}, 
at four $T$ gates per Toffoli, covering adders and multipliers per rollout step, degree-three function kernels, and the Toffoli-to-$T$ conversion.
These modeled counts are not a complete fault-tolerant runtime estimate and do not independently certify Assumption~\ref{as:arith}.
\begin{table}[t]
\caption{Modeled cost of one coherent query ($w=24$).}
\label{tab:resource}
\centering
\setlength{\tabcolsep}{4pt}
\begin{tabular}{lr}
\hline
Resource & Modeled count\\
\hline
Forward oracle & $43{,}152$ Toffoli\\
With uncomputation & $86{,}304$ Toffoli\\
Total $T$-count & $3.5\times10^5$\\
Logical qubits & $\approx225$\\
Classical rollout & $\approx120$ operations\\
$C_S/c_S$ & $\approx2.9\times10^3$\\
\hline
\end{tabular}
\end{table}
The reduced, negligible-search form following~\eqref{eq:crossover}, evaluated at $m=2$ and $a'=0.0141$, gives the nominal threshold $\eps\approx1.5\times10^{-3}$. That form presumes a nonsaturated classical branch, which fails here: at this accuracy $1/(\eps^2a')\approx3\times10^7\gg D$. The full condition~\eqref{eq:crossover} in fact cannot hold for this instance at any accuracy, since its left side is at least $(C_S/c_S)\sqrt D\approx4.6\times10^4$ while its right side is at most $D=256$. The demonstrated query reduction is therefore not an operation-count advantage at this scale.
Certified elementary-function approximations, synthesis, error correction, and physical gate times are left to a hardware-level analysis.
\section{Conclusion}\label{sec:conclusion}

We reformulated the finite-ensemble MPPI update as a ratio of bounded path expectations encoded as quantum success probabilities. Amplitude estimation improves the query dependence on accuracy and rare-event desirability over classical Monte Carlo, matching known lower bounds for the underlying scalar problem. For a fixed ensemble, the low-temperature weights concentrate on the minimum-cost trajectories, and quantum minimum finding recovers the limiting input when the minimizer is unique, whereas multiple minimizers yield an averaged limit. The example confirms the predicted scalings while showing that reversible-oracle overhead can eliminate the operation-level advantage. Future work includes certified circuit implementations, improved control-dimension dependence, and system-specific closed-loop guarantees.
%
\bibliographystyle{IEEEtran}
\bibliography{refs}

\begin{thebibliography}{10}
\providecommand{\url}[1]{#1}
\csname url@samestyle\endcsname
\providecommand{\newblock}{\relax}
\providecommand{\bibinfo}[2]{#2}
\providecommand{\BIBentrySTDinterwordspacing}{\spaceskip=0pt\relax}
\providecommand{\BIBentryALTinterwordstretchfactor}{4}
\providecommand{\BIBentryALTinterwordspacing}{\spaceskip=\fontdimen2\font plus
\BIBentryALTinterwordstretchfactor\fontdimen3\font minus \fontdimen4\font\relax}
\providecommand{\BIBforeignlanguage}[2]{{%
\expandafter\ifx\csname l@#1\endcsname\relax
\typeout{** WARNING: IEEEtran.bst: No hyphenation pattern has been}%
\typeout{** loaded for the language `#1'. Using the pattern for}%
\typeout{** the default language instead.}%
\else
\language=\csname l@#1\endcsname
\fi
#2}}
\providecommand{\BIBdecl}{\relax}
\BIBdecl

\bibitem{kappen05}
H.~J. Kappen, ``Path integrals and symmetry breaking for optimal control theory,'' \emph{J. Stat. Mech.}, vol. 2005, no.~11, p. P11011, 2005.

\bibitem{todorov09}
E.~Todorov, ``Efficient computation of optimal actions,'' \emph{Proc. Natl. Acad. Sci.}, vol. 106, no.~28, pp. 11\,478--11\,483, 2009.

\bibitem{theodorou10}
E.~Theodorou, J.~Buchli, and S.~Schaal, ``A generalized path integral control approach to reinforcement learning,'' \emph{J. Mach. Learn. Res.}, vol.~11, pp. 3137--3181, 2010.

\bibitem{williams17}
G.~Williams, A.~Aldrich, and E.~A. Theodorou, ``Model predictive path integral control: From theory to parallel computation,'' \emph{J. Guid. Control Dyn.}, vol.~40, no.~2, pp. 344--357, 2017.

\bibitem{williams18}
G.~Williams, P.~Drews, B.~Goldfain, J.~M. Rehg, and E.~A. Theodorou, ``Information-theoretic model predictive control: Theory and applications to autonomous driving,'' \emph{IEEE Trans. Robot.}, vol.~34, no.~6, pp. 1603--1622, 2018.

\bibitem{bhmt02}
G.~Brassard, P.~H{\o}yer, M.~Mosca, and A.~Tapp, ``Quant. amplitude amplification and est.'' \emph{Contemp. Math.}, vol. 305, pp. 53--74, 2002.

\bibitem{montanaro15}
A.~Montanaro, ``Quantum speedup of {M}onte {C}arlo methods,'' \emph{Proc. R. Soc. A}, vol. 471, p. 20150301, 2015.

\bibitem{yoon22}
H.-J. Yoon, C.~Tao, H.~Kim, N.~Hovakimyan, and P.~Voulgaris, ``Sampling complexity of path integral methods for trajectory optimization,'' in \emph{Proc. Amer. Control Conf. (ACC)}, 2022, pp. 3482--3487.

\bibitem{patil24}
A.~Patil, G.~A. Hanasusanto, and T.~Tanaka, ``Discrete-time stochastic {LQR} via path integral control and its sample complexity analysis,'' \emph{IEEE Control Syst. Lett.}, vol.~8, pp. 1595--1600, 2024.

\bibitem{novara24}
C.~Novara, M.~Boggio, and D.~Volpe, ``A quantum optimization approach to nonlinear model predictive control,'' in \emph{Proc. Eur. Control Conf. (ECC)}, 2025, pp. 2810--2815.

\bibitem{doriguello22}
J.~F. Doriguello, A.~Luongo, J.~Bao, P.~Rebentrost, and M.~Santha, ``Quantum algorithm for stochastic optimal stopping problems with applications in finance,'' in \emph{Proc. 17th Conf. Theory of Quantum Computation (TQC)}, ser. LIPIcs, vol. 232, 2022, pp. 2:1--2:24.

\bibitem{wiedemann22}
S.~Wiedemann, D.~Hein, S.~Udluft, and C.~B. Mendl, ``Quantum policy iteration via amplitude estimation and {G}rover search---towards quantum advantage for reinforcement learning,'' 2022, arXiv:2206.04741.

\bibitem{weinberg26}
A.~I. Weinberg, ``{QuantFPFlow}: Quantum amplitude estimation for {F}okker--{P}lanck policy optimisation in continuous reinforcement learning,'' 2026, arXiv:2605.16429.

\bibitem{karatzasshreve}
I.~Karatzas and S.~E. Shreve, \emph{Brownian Motion and Stochastic Calculus}, 2nd~ed.\hskip 1em plus 0.5em minus 0.4em\relax New York, NY: Springer, 1991.

\bibitem{kloedenplaten}
P.~E. Kloeden and E.~Platen, \emph{Numerical Solution of Stochastic Differential Equations}.\hskip 1em plus 0.5em minus 0.4em\relax Berlin: Springer, 1992.

\bibitem{nielsenchuang}
M.~A. Nielsen and I.~L. Chuang, \emph{Quantum Computation and Quantum Information}, 10th~ed.\hskip 1em plus 0.5em minus 0.4em\relax Cambridge, U.K.: Cambridge Univ. Press, 2010.

\bibitem{bennett73}
C.~H. Bennett, ``Logical reversibility of computation,'' \emph{IBM J. Res. Dev.}, vol.~17, no.~6, pp. 525--532, 1973.

\bibitem{herbert21}
S.~Herbert, ``No quantum speedup with {G}rover--{R}udolph state preparation for quantum {M}onte {C}arlo integration,'' \emph{Phys. Rev. E}, vol. 103, p. 063302, 2021.

\bibitem{ar20}
S.~Aaronson and P.~Rall, ``Quantum approximate counting, simplified,'' in \emph{Proc. SIAM Symp. Simplicity in Algorithms (SOSA)}, 2020, pp. 24--32.

\bibitem{nayakwu99}
A.~Nayak and F.~Wu, ``The quantum query complexity of approximating the median and related statistics,'' in \emph{Proc. 31st ACM Symp. Theory Comput.}, 1999, pp. 384--393.

\bibitem{durr96}
C.~D{\"u}rr and P.~H{\o}yer, ``A quantum algorithm for finding the minimum,'' 1996, arXiv:quant-ph/9607014.

\bibitem{bbbv97}
C.~H. Bennett, E.~Bernstein, G.~Brassard, and U.~Vazirani, ``Strengths and weaknesses of quantum computing,'' \emph{SIAM J. Comput.}, vol.~26, no.~5, pp. 1510--1523, 1997.

\bibitem{suzuki20}
Y.~Suzuki, S.~Uno, R.~Raymond, T.~Tanaka, T.~Onodera, and N.~Yamamoto, ``Amplitude estimation without phase estimation,'' \emph{Quantum Inf. Process.}, vol.~19, p.~75, 2020.

\end{thebibliography}

\end{document}